\newtheorem{theorem}{Theorem}[section]
\newtheorem{lemma}[theorem]{Lemma}
\newtheorem{problem}[theorem]{Problem}
\newtheorem{remark}[theorem]{Remark}
\newtheorem{ex}[theorem]{Example}
\begin{document}

\title{Control Design for Markov Chains under \\ Safety Constraints:
A Convex Approach}

\author{  Eduardo~Arvelo and Nuno C. Martins
\thanks{This work was supported by NSF Grant CNS  0931878,  AFOSR Grant FA95501110182, ONR UMD-AppEl Center and the Multiscale Systems Center, one of six research centers funded under the Focus Center Research Program.}
\thanks{E. Arvelo is with the Electrical and Computer Engineering Department, University of Maryland, College Park, MD 20742, USA. email: earvelo@umd.edu.}% <-this % stops a space
\thanks{N. C. Martins is with the Faculty of Electrical and Computer Engineering Department and the Institute for Systems Research, University of Maryland, College Park, MD 20742, USA. email: nmartins@umd.edu.}}% <-this 

\maketitle

\begin{abstract}
This paper focuses on the design of time-invariant memoryless control policies for fully observed controlled Markov chains, with a finite state space. Safety constraints are imposed through a pre-selected set of forbidden states. A state is qualified as safe if it is not a forbidden state and the probability of it transitioning to a forbidden state is zero. The main objective is to obtain control policies whose closed loop generates the maximal set of safe recurrent states, which may include multiple recurrent classes. A design method is proposed that relies on a finitely parametrized convex program inspired on entropy maximization principles. A numerical example is provided and the adoption of additional constraints is discussed.
\end{abstract}

%\IEEEpeerreviewmaketitle

%%%%%%%%%%
%INTRODUCTION%
%%%%%%%%%%
\section{Introduction}

The formalism of controlled Markov chains is widely used to describe the behavior of systems whose state transitions probabilistically among different configurations over time. Control variables act by dictating the state transition probabilities, subject to constraints that are specified by the model. Existing work has addressed the design of controllers that optimize a wide variety of costs that depend linearly on the parameters that characterize the probabilistic behavior of the system. The two most commonly used tools are linear and dynamic programming. For an extensive survey, see \cite{Arapostathis:1993vv} and the references therein.

We focus on the design of time-invariant memoryless policies for fully observable controlled Markov chains with finite state and control spaces, represented as $\mathbb{X}$ and $\mathbb{U}$, respectively.
Given a pre-selected set $\mathbb{F}$ of forbidden states of $\mathbb{X}$, a state is qualified as $\mathbb{F}$-safe if it is not in $\mathbb{F}$ and the probability of it transitioning to an element of $\mathbb{F}$ is zero.
Here, forbidden states may represent unwanted configurations. We address a problem on control design subject to \underline{safety constraints} that consists on finding a control policy that leads to the maximal set of $\mathbb{F}$-safe recurrent states $\mathbb{X}_{\mathbb{F}}^R$. This problem is relevant when persistent state visitation is desirable for the largest number of states without violating the safety constraint, such as in the context of persistent surveillance. 

We show in Section~\ref{sec:recurrent} that the maximal set of $\mathbb{F}$-safe recurrent states $\mathbb{X}_{\mathbb{F}}^R$ is well defined and achievable by suitable control policies. 
As we discuss in Remark~\ref{213317j2},  $\mathbb{X}_{\mathbb{F}}^R$ may contain multiple recurrent classes, but does not intersect the set of forbidden states $\mathbb{F}$. %(see Fig.~\ref{fig:sets}).
 
\subsection{Comparison with existing work}
Safety-constrained controlled Markov chains have been studied in a series of papers by Arapostathis et al., where the state probability distribution is restricted to be bounded above and below by safety vectors at all times. In \cite{Arapostathis:2003gs}, \cite{Arapostathis:2005bk} and \cite{Hsu:2010th}, the authors propose algorithms to find the set of distributions whose evolution under a given control policy respect the safety constraint. In \cite{Wu:2004ik}, an augmented Markov chain is used to find the the maximal set of probability distributions whose evolution respect the safety constraint over all admissible non-stationary control policies.

Here we are not concerned with the maximization of a given performance objective, but rather in systematically characterizing the maximal set of $\mathbb{F}$-safe recurrent states and its corresponding control policies. The main contribution of this paper is to solve this problem via finitely parametrized convex programs. Our approach is rooted on entropy maximization principles, and the proposed solution can be easily implemented using standard convex optimization tools, such as the ones described in \cite{cvx}.

\subsection{Paper organization}
The remainder of this paper is organized as follows. Section \ref{sec:notation} provides notation, basic definitions and the problem statement. The convex program that generates the maximal set of $\mathbb{F}$-safe recurrent states is presented in Section \ref{sec:recurrent} along with a numerical example. Further considerations are given in Section \ref{sec:further}, while conclusions are discussed in Section~\ref{sec:conclusion}.

%%%%%%%%%%%
%PRELIMINARIES%
%%%%%%%%%%%
\section{Preliminaries and Problem Statement}\label{sec:notation}

The following notation is used throughout the paper: 
\vspace{2mm}

\begin{tabular}{l l}
	$\mathbb{X}$ & state space of the Markov chain \\
	$\mathbb{U}$ & set of control actions \\ 
	$n$ & cardinality of $\mathbb{X}$ \\
	$m$ & cardinality of $\mathbb{U}$ \\
	$X_k$ & state of the Markov chain at time $k$\\
	$U_k$ & control action at time $k$\\
	$\mathbb{P}_\mathbb{X}$ & set of all pmfs with support in $\mathbb{X}$\\
	$\mathbb{P}_\mathbb{U}$ &  set of all pmfs with support in $\mathbb{U}$ \\
	$\mathbb{P}_\mathbb{XU}$ &  set of all joint pmfs with support in $\mathbb{X} \times \mathbb{U}$\\
	$\mathbb{S}_{f}$ & support of a pmf $f$
\end{tabular}

\vspace{3mm}
The recursion of the controlled Markov chain is given by the (conditional) probability mass function of $X_{k+1}$ given the previous state $X_k$ and control action $U_k$, and is denoted as:
\begin{align*}
	\mathcal{Q}(x^+, x, u) \overset{def}{=} P(X_{k+1} = x^+| X_k = x, U_k = u), \qquad x^+, x \in \mathbb{X}, ~u \in \mathbb{U}, 
\end{align*}
We denote any memoryless time-invariant control policy by
\begin{align*}
	\mathcal{K}(u, x) \overset{def}{=}P(U_k = u| X_k = x), \qquad u \in \mathbb{U}, x \in \mathbb{X}
\end{align*}
where $\sum_{u \in\mathbb{U}}\mathcal{K}(u,x)=1$ for all $x$ in $\mathbb{X}$. The set of all such policies is denoted as $\mathbb{K}$.
\paragraph*{Assumption} Throughout the paper, we assume that the controlled Markov chain $\mathcal{Q}$ is given. Hence, all quantities and sets that depend on the closed loop behavior will be indexed only by the underlying control policy $\mathcal{K}$.

Given a control policy $\mathcal{K}$, the conditional state transition probability of the closed loop is
represented as:
\begin{align}\label{eqn:onestep}
	P_{\mathcal{K}}(X_{k+1} & = x^+| X_k = x) \overset{def}{=} \sum_{u\in\mathbb{U}}\mathcal{Q}(x^+,x,u)\mathcal{K}(u,x), \qquad x^+,x \in \mathbb{X}
\end{align}
We define the \emph{set of recurrent states} $\mathbb{X}^R_{\mathcal{K}}$ and the \emph{set of $\mathbb{F}$-safe recurrent states} $\mathbb{X}^R_{\mathcal{K},\mathbb{F}}$ under a control policy $\mathcal{K}$ to be:
\begin{align*}
	&\mathbb{X}^R_{\mathcal{K}}\! \overset{def}{=}\!\!
	\Big\{x \in \mathbb{X}: P_{\mathcal{K}} (X_{k} \!=\! x \text{ for some } k>0| X_0 \!=\! x)\!=\!1 \Big\} \\
	&\mathbb{X}^R_{\mathcal{K},\mathbb{F}}\! \overset{def}{=}\!\! \Big\{  x \in \mathbb{X}^R_{\mathcal{K}} \text{ : } P_{\mathcal{K}} (X_{k+1}\!=\!x^+ | X_k \!=\!x) \!=\! 0 \text{, } x^+ \in \mathbb{F} \Big\}
\end{align*}
The maximal set of $\mathbb{F}$-safe recurrent states is defined as:
\begin{align*}
	\mathbb{X}^R_{\mathbb{F}} \overset{def}{=} \bigcup_{\mathcal{K} \in \mathbb{K} } \mathbb{X}^R_{\mathcal{K},\mathbb{F}}
\end{align*}
The problem we address in this paper is defined below:
%
%PROBLEM 1 (RECURRENT)
%\hlinee 
\begin{problem}\label{prob1}
Given  $\mathbb{F}$, determine $\mathbb{X}^R_{\mathbb{F}}$ and a control policy $\mathcal{K}^{*}_R$ such that $\mathbb{X}^R_{\mathcal{K}^{*}_R} = \mathbb{X}^R_{\mathbb{F}}$. 
\end{problem} A solution to Problem~\ref{prob1} is provided in Section~\ref{sec:recurrent}, where we also show the existence of $\mathcal{K}^{*}_R$.
%\hlinee
%
\begin{remark} \label{213317j2} The following is a list of important observations on Problem~\ref{prob1}:
\begin{itemize}
	\item  The set $\mathbb{X}_{\mathbb{F}}^R$ may contain more than one recurrent class and it will exclude any recurrent class that intersects $\mathbb{F}$.
	\item There is no $\mathcal{K}$ such that the states in $\mathbb{X} \diagdown \mathbb{X}_{\mathbb{F}}^R$  can be
	          $\mathbb{F}$-safe and recurrent.
	\item If the closed loop Markov chain is initialized in $\mathbb{X}_{\mathbb{F}}^R$ then the probability that it will ever visit a state in $\mathbb{F}$ is zero.
\end{itemize}
\end{remark}
%	
%	
%
%%%%%%%%%%%%%%%%%%%%%%
%MAIN SECTION: RECURRENT STATES%
%%%%%%%%%%%%%%%%%%%%%%
\section{Maximal $\mathbb{F}$-Safe Set of Recurrent States}\label{sec:recurrent}
We propose a convex program to solve problem \ref{prob1}. Consider now the following convex optimization program:
\begin{align}
	&f^*_{XU} =  \arg \max_{f_{XU} \in \mathbb{P}_\mathbb{XU}} \mathcal{H}(f_{XU})\label{eqn:opt1a}\\
	&\text{subject to: } \nonumber\\
	&\sum_{u^+ \in\mathbb{U}}f_{XU}(x^+,u^+) = \sum_{x \in \mathbb{X}, u\in\mathbb{U}}\mathcal{Q}(x^+, x, u)f_{XU}(x,u),~~ x^+ \in \mathbb{X}\label{eqn:opt1b}\\
	&\sum_{u \in \mathbb{U}} f_{XU}(x,u) = 0, ~~ x \in \mathbb{F}\label{eqn:opt1c}
\end{align}

where $\mathcal{H}: \mathbb{P}_{\mathbb{X}\mathbb{U}} \rightarrow \Re_{\geq 0}$ is the entropy of $f_{XU}$, and is given by
\begin{align*}
	\mathcal{H}(f_{XU}) = -\sum_{u \in \mathbb{U}}\sum_{x \in \mathbb{X}}f_{XU}(x,u)\ln(f_{XU}(x,u))
\end{align*} where we adopt the standard convention that $0 \ln ( 0 ) = 0$.

The following Theorem provides a solution to Problem~\ref{prob1}.
%
%THEOREM MAX RECURRENT.
\begin{theorem}\label{thm1}
Let $\mathbb{F}$ be given, and assume that (\ref{eqn:opt1a})-(\ref{eqn:opt1c}) is feasible and that $f^*_{XU}$ is the optimal solution. In addition, adopt the marginal pmf $f^*_X(x) = \sum_{u \in \mathbb{U}} f^*_{XU}(x,u)$
and consider that $\mathcal{G}: \mathbb{U} \times \mathbb{X} \rightarrow [0,1]$ is any 
function satisfying $\sum_{u \in \mathbb{U}} \mathcal{G} (u,x)=1$ for all $x$ in $\mathbb{X}$.
The following holds:
\begin{enumerate}[(a)]
	\item $\mathbb{X}^R_{\mathcal{K}, \mathbb{F}} \subseteq \mathbb{S}_{f^*_X}$ for all $\mathcal{K}$ in $\mathbb{K}$.
	\item $\mathbb{X}^R_{\mathbb{F}} = \mathbb{S}_{f^*_X}$
	\item $\mathbb{X}^R_{\mathcal{K}^*_R} = \mathbb{X}^R_{\mathbb{F}}$ for $\mathcal{K}^*_R$ given by:
	\begin{align}\label{eqn:control_opt}
		\!\!\!\!\!\!\!\! \mathcal{K}^*_R(u,x) =
		\begin{cases}
		 	\frac{f^*_{XU}(x,u)}{f^*_X(x)}, & \!x \in \mathbb{S}_{f^*_X} \\
			\mathcal{G}(u,x),\! &\text{otherwise}
		\end{cases}, \quad (u,x) \in \mathbb{U}\times\mathbb{X}
	\end{align}
\end{enumerate}
where we use $\mathbb{S}_{f^*_X} = \{x\in \mathbb{X} \text{ : }  f^*_X(x)>0\}$.
\end{theorem}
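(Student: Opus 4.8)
The plan is to exploit the exact correspondence between the feasible points of (\ref{eqn:opt1a})--(\ref{eqn:opt1c}) and the $\mathbb{F}$-safe invariant distributions of the closed loop, and then to invoke the support-maximizing property of entropy. First I would show that constraint (\ref{eqn:opt1b}) is nothing but the stationarity condition: given any feasible $f_{XU}$ with marginal $f_X$, define $\mathcal{K}(u,x)=f_{XU}(x,u)/f_X(x)$ on $\mathbb{S}_{f_X}$; substituting $f_{XU}(x,u)=\mathcal{K}(u,x)f_X(x)$ into (\ref{eqn:opt1b}) and using the definition of $P_{\mathcal{K}}$ in (\ref{eqn:onestep}) yields $f_X(x^+)=\sum_{x}P_{\mathcal{K}}(x^+\mid x)f_X(x)$, so $f_X$ is invariant for $P_{\mathcal{K}}$, while (\ref{eqn:opt1c}) forces $f_X$ to vanish on $\mathbb{F}$. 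Conversely, any invariant $\pi$ of any closed loop $P_{\mathcal{K}}$ with $\pi$ vanishing on $\mathbb{F}$ produces a feasible point via $f_{XU}(x,u)=\pi(x)\mathcal{K}(u,x)$. Thus the feasible set is precisely the set of $\mathbb{F}$-safe invariant distributions, described through their joint $(X,U)$ lifts.

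The second step is to identify the supports of these feasible points with $\mathbb{F}$-safe recurrent states. I would use two standard facts about finite chains: (i) the support of an invariant distribution is closed under $P_{\mathcal{K}}$ (if $f_X(x)>0$ and $P_{\mathcal{K}}(y\mid x)>0$ then $f_X(y)>0$) and consists entirely of recurrent states, being a union of recurrent communicating classes; and (ii) every recurrent class carries an invariant distribution supported exactly on it. Combining (i) with $f_X(\mathbb{F})=0$ shows that $\mathbb{S}_{f_X}$ avoids $\mathbb{F}$ and, by closedness, none of its states transitions into $\mathbb{F}$, so $\mathbb{S}_{f_X}\subseteq\mathbb{X}^R_{\mathcal{K},\mathbb{F}}$; fact (ii) gives the reverse containment class by class. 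Taking unions over all feasible points and all policies then yields $\bigcup_{f\text{ feasible}}\mathbb{S}_{f_X}=\mathbb{X}^R_{\mathbb{F}}$.

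The heart of the argument is the claim that the entropy maximizer has the largest possible support among feasible points, that is $\mathbb{S}_{f^*_X}=\bigcup_{f\text{ feasible}}\mathbb{S}_{f_X}$. Since the constraints are linear the feasible set is convex, and $\mathcal{H}$ is strictly concave. If some feasible $g$ placed mass at a pair $(x,u)$ with $f^*_{XU}(x,u)=0$, I would perturb along $f_\epsilon=(1-\epsilon)f^*_{XU}+\epsilon g$, which remains feasible, and differentiate $\mathcal{H}(f_\epsilon)$ at $\epsilon=0^+$: the contribution of the newly activated coordinate behaves like $-\epsilon g(x,u)\ln\big(\epsilon g(x,u)\big)$, whose derivative diverges to $+\infty$, so $\mathcal{H}$ strictly increases, contradicting optimality of $f^*_{XU}$. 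Hence no feasible point can have support outside $\mathbb{S}_{f^*_{XU}}$, and passing to the $X$-marginal gives the support identity. This immediately yields (b), $\mathbb{X}^R_{\mathbb{F}}=\mathbb{S}_{f^*_X}$, and (a), since each $\mathbb{X}^R_{\mathcal{K},\mathbb{F}}$ is one of the sets in the union and is therefore contained in $\mathbb{S}_{f^*_X}$.

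For (c), I would feed $f^*_X$ and the induced policy $\mathcal{K}^*_R$ of (\ref{eqn:control_opt}) back through the first step: by construction $f^*_X$ is invariant for $P_{\mathcal{K}^*_R}$, so $\mathbb{S}_{f^*_X}$ is closed under $P_{\mathcal{K}^*_R}$ and every one of its states is recurrent, giving $\mathbb{S}_{f^*_X}\subseteq\mathbb{X}^R_{\mathcal{K}^*_R}$ and, with (b), $\mathbb{X}^R_{\mathbb{F}}\subseteq\mathbb{X}^R_{\mathcal{K}^*_R}$. The reverse inclusion is where I expect the main obstacle: because $\mathcal{G}$ is arbitrary off the support, $\mathcal{K}^*_R$ could in principle create recurrent behavior among states outside $\mathbb{S}_{f^*_X}$. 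The way I would close this is to use that $\mathbb{S}_{f^*_X}$ is invariant, so any recurrent class of $P_{\mathcal{K}^*_R}$ is either contained in $\mathbb{S}_{f^*_X}$ or disjoint from it; a disjoint class that avoided $\mathbb{F}$ would itself be an $\mathbb{F}$-safe recurrent class and hence, by the maximality established in (b), would already lie in $\mathbb{S}_{f^*_X}$, a contradiction, so any class outside $\mathbb{S}_{f^*_X}$ must meet $\mathbb{F}$ and contributes nothing to $\mathbb{X}^R_{\mathbb{F}}$. Making this last step airtight relies on the characterization of $\mathbb{X}^R_{\mathbb{F}}$ from Section~\ref{sec:recurrent}, namely that safe recurrence is a property of entire recurrent classes, and I would lean on that result to conclude $\mathbb{X}^R_{\mathcal{K}^*_R}=\mathbb{X}^R_{\mathbb{F}}$.
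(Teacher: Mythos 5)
Your treatment of (a) and (b) follows the paper's own route. The perturbation $f_\epsilon=(1-\epsilon)f^*_{XU}+\epsilon g$ together with the divergence of the derivative of $-t\ln t$ at a newly activated coordinate is precisely the paper's Lemma~\ref{lemma_entropy} (with $\lambda=1-\epsilon$), and the identification of feasible points of (\ref{eqn:opt1b})--(\ref{eqn:opt1c}) with $\mathbb{F}$-safe invariant distributions is exactly how the paper proves (a) (every policy with nonempty $\mathbb{X}^R_{\mathcal{K},\mathbb{F}}$ yields a feasible $f^{\mathcal{K}}_{XU}$) and the inclusion $\mathbb{S}_{f^*_X}\subseteq\mathbb{X}^R_{\mathbb{F}}$ in (b). You are more explicit than the paper about the supporting Markov-chain facts (closedness of the support of an invariant pmf, recurrence of its states, existence of an invariant pmf on each recurrent class), which is a service, not a deviation.

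On (c) you correctly isolate the one genuine subtlety, which the paper dismisses with ``follows from the proof of (b),'' but your resolution does not close it. Your class argument shows that any recurrent class $C$ of the closed loop under $\mathcal{K}^*_R$ that is disjoint from $\mathbb{S}_{f^*_X}$ must meet $\mathbb{F}$ (correct: otherwise its invariant pmf would be feasible and Lemma~\ref{lemma_entropy} would force $C\subseteq\mathbb{S}_{f^*_X}$). But if such a class exists, its states are still recurrent under $\mathcal{K}^*_R$, hence belong to $\mathbb{X}^R_{\mathcal{K}^*_R}$ while lying outside $\mathbb{X}^R_{\mathbb{F}}$; observing that the class ``contributes nothing to $\mathbb{X}^R_{\mathbb{F}}$'' addresses the wrong side of the claimed equality. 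What your argument (and, in substance, the paper's) actually establishes is $\mathbb{X}^R_{\mathcal{K}^*_R,\mathbb{F}}=\mathbb{X}^R_{\mathbb{F}}$ together with $\mathbb{X}^R_{\mathbb{F}}\subseteq\mathbb{X}^R_{\mathcal{K}^*_R}$. The literal equality $\mathbb{X}^R_{\mathcal{K}^*_R}=\mathbb{X}^R_{\mathbb{F}}$ would additionally require that no recurrent class of the closed loop survives in $\mathbb{X}\setminus\mathbb{S}_{f^*_X}$, and this can genuinely fail regardless of how $\mathcal{G}$ is chosen: if some state of $\mathbb{F}$ is absorbing under every control action, it is recurrent under $\mathcal{K}^*_R$ but is not in $\mathbb{X}^R_{\mathbb{F}}$. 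So the residue you flagged is not a defect of your proof that a better argument would remove; it is an imprecision in the statement of (c), which should be read as $\mathbb{X}^R_{\mathcal{K}^*_R,\mathbb{F}}=\mathbb{X}^R_{\mathbb{F}}$ (the version your class decomposition, and the paper's proof of (b), do deliver).
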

The proof of Theorem \ref{thm1} is given at the end of this section.
%
%%%%%%%%
%EXAMPLE1%
%%%%%%%%
\begin{ex}~\label{sec:ex1} {\bf (Computation of the maximal set of $\mathbb{F}$-safe recurrent states)}
 Suppose $\mathbb{X} = \{1, ... ,8\}$ and $\mathbb{U}=\{1, 2\}$, and consider the controlled Markov chain whose probability transition matrix is  $Q^u$, where $Q^u_{ij} \overset{def}{=} \mathcal{Q}(i, j, u)$:

{\begin{align*}
	\begin{array}{lll}
	Q^1\!\!=\!\!\!
	\begin{bmatrix}
	0 & .5 & 0 & .5 & 0 & 0 & 0 & 0\\
	.3 & .7 & 0 & 0 & 0 & 0 & 0 & 0\\
	.2 & 0 & .5 & 0 & 0 & 0 & 0 & .3\\
	0 & 0 & 0 & 0 & 0 & 0 & 1 & 0\\
	0 & 0 & 0 & 0 & 0 & 1 & 0 & 0\\
	0 & 0 & 0 & .1 & .3 & 0 & 0 & .6\\
	0 & 0 & 0 & 0 & 0 & 0 & 1 & 0\\
	0 & 0 & 0 & 0 & 0 & 0 & 1 & 0\\
	 \end{bmatrix}\!\!\!,
	Q^2 \! \!=\! \!\!
	\begin{bmatrix}
	0 & 0 & 1 & 0 & 0 & 0 & 0 & 0\\
	0 & 0 & 1 & 0 & 0 & 0 & 0 & 0\\
	.2 & .8 & 0 & 0 & 0 & 0 & 0 & 0\\
	1 & 0 & 0 & 0 & 0 & 0 & 0 & 0\\
	0 & .2 & 0 & 0 & .2& 0 & .6 & 0\\
	0 & 0 & .2 & .8 & 0 & 0 & 0 & 0\\
	0 & 0 & 0 & 0 & 0 & 0 & 0 & 1\\
	0 & 0 & 0 & 0 & 0 & .1 & 0 & .9\\
	 \end{bmatrix}
	\end{array}
\end{align*}
}

\begin{figure}[t]
\centering
\definecolor{cffffff}{RGB}{255,255,255}

\input{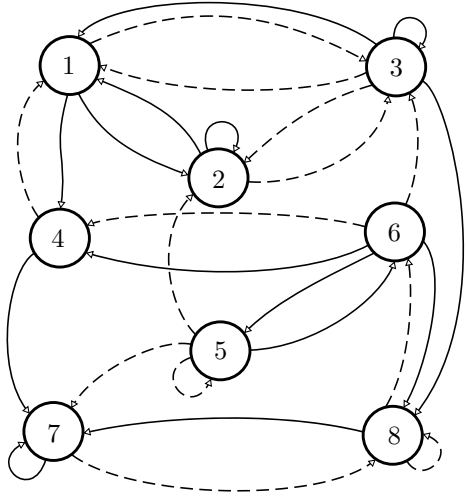}

\caption{An example of a controlled Markov chain with eight states and two control actions. The directed arrows indicate state transitions with positive probability. The solid and dashed lines represent control actions $1$ and $2$, respectively. }
\label{fig:example2}
\end{figure}

The interconnection diagram of this controlled Markov chain can be seen in Fig. \ref{fig:example2}. Suppose that state $4$ is undesirable and should never be visited (in other words, $\mathbb{F} = \{4\}$). By solving (\ref{eqn:opt1a})-(\ref{eqn:opt1c}), we find that:
\begin{align*}
	F^* = \begin{bmatrix}
	0&   .16 &   0~ &   0~ &  0~  &  0~   & .15&  .15\\
	.08  &  .11 &  .2~   & 0~ & 0~ & 0~ & .15  & 0\\
	\end{bmatrix},
\end{align*}
where $F^*_{ij} \overset{def}{=} f^*_{XU}(i, j)$. This implies by Theorem \ref{thm1} that:
\begin{align*}
	\mathbb{X}^R_{\mathbb{F}} = \{1, 2, 3, 7, 8\},
\end{align*}
and the associated control policy is given by the matrix:
\begin{align*}
	K^*=
	\begin{bmatrix}
	0~ &  .58~ & 0~   & G_{1,4}~ &  G_{1,5} ~   & G_{1,6}~   & .5 & 1 \\
	1 ~ & .42~ & 1~   & G_{2,4}~ &  G_{2,5} ~  & G_{2,6} ~ & .5 &  0\\
	\end{bmatrix},
\end{align*}
where $K_{ij}^*\overset{def}{=}\mathcal{K}^*_R(i,j)$; and $G \in [0,1]^{2 \times 8}$ is any matrix whose the columns sum up to 1.

It is important to highlight some interesting points that would otherwise not be clear if we were to consider a large system. Note that state $6$ is not in $\mathbb{X}_{\mathbb{F}}^R$ because regardless of which control action is chosen the probability of transitioning to $\mathbb{F}$ is always a positive. State $5$ cannot be made recurrent even though it is a safe state. Furthermore, when the chain visits states $1$ and $8$, one of the two available control actions cannot be chosen since that choice leads to a positive probability of reaching $\mathbb{F}$. In this scenario there are two safe recurrent classes: \{1,2,3\} and \{7,8\}. Note that the control action $1$ cannot be chosen when the chain visits state $3$ because that choice makes the states $1, 2, \text{and }  3$ transient. 
\end{ex}

\begin{remark} Consider a control policy $\mathcal{K}$ for which $\mathcal{K}(x,u) > 0$ if and only if  $\mathcal{K}^*_R(x,u) > 0$, where $\mathcal{K}^*_R$ is an optimal solution given by (\ref{eqn:control_opt}). It holds that $\mathbb{X}_{\mathcal{K}}^R =\mathbb{X}_{\mathcal{K}^*_R}^R = \mathbb{X}_\mathbb{F}^R$. \end{remark}

\begin{remark}\label{rem2}
Consider a control policy $\mathcal{K}$ for which $\mathbb{X}_{\mathcal{K}}^R = \mathbb{X}_\mathbb{F}^R$. It holds that $\mathcal{K}(x,u) = 0$ if $\mathcal{K}^*_R(x,u) = 0$.
\end{remark}

\subsection{Proof of Theorem \ref{thm1}}
To facilitate the proof we first establish the following Lemma:
%LEMMA ENTROPY
\begin{lemma}\label{lemma_entropy}
Let  $\mathbb{Y}$ be a finite set and $\mathbb{W}$ be a convex subset of  $\mathbb{P}_{\mathbb{Y}}$, the set of all pmfs with support in $\mathbb{Y}$. Consider the following problem:
\begin{align*}
	&f^* =  \arg \max_{f \in \mathbb{W}} \mathcal{H}(f)\\
\end{align*}
where $\mathcal{H}(f)$ is the entropy of $f$ in $\mathbb{P}_{\mathbb{Y}}$ and is given by $ \mathcal{H}(f) = -\sum_{y \in \mathbb{Y}}f(y)\ln(f(y))$, where we adopt the convention that $0 \ln (0) = 0$.
The following holds:
\begin{align*}
	 \mathbb{S}_{f} \subseteq \mathbb{S}_{f^*}, ~ f \in \mathbb{W}
\end{align*}
	where $\mathbb{S}_{f} = \{y\in \mathbb{Y}| f(y)>0\}$, $f \in \mathbb{P}_{\mathbb{Y}}$.
\end{lemma}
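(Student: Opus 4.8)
The plan is to argue by contradiction, driven entirely by the fact that $-t\ln t$ has infinite right-derivative at $t=0$, so that entropy strongly rewards placing mass on a previously unused atom. Suppose the claim fails. Then there is some $f\in\mathbb{W}$ and an atom $y_0\in\mathbb{Y}$ with $f(y_0)>0$ but $f^*(y_0)=0$, i.e. $\mathbb{S}_f\not\subseteq\mathbb{S}_{f^*}$. Since $\mathbb{W}$ is convex and contains both $f^*$ and $f$, the segment $g_\lambda\overset{def}{=}(1-\lambda)f^*+\lambda f$ lies in $\mathbb{W}$ for every $\lambda\in[0,1]$. I would show that $\mathcal{H}(g_\lambda)>\mathcal{H}(f^*)$ for all sufficiently small $\lambda>0$, which contradicts $f^*$ being an entropy maximizer over $\mathbb{W}$ and forces $\mathbb{S}_f\subseteq\mathbb{S}_{f^*}$.

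The first concrete step is a derivative computation. Set $\phi(\lambda)=\mathcal{H}(g_\lambda)$; since $\mathbb{Y}$ is finite this is a finite sum, continuous on $[0,1]$ and differentiable on $(0,1)$ (each summand $-g_\lambda(y)\ln g_\lambda(y)$ is smooth wherever $g_\lambda(y)>0$, and is identically zero on the indices where $f^*(y)=f(y)=0$). Writing $d(y)=f(y)-f^*(y)$ and differentiating term by term yields $\phi'(\lambda)=-\sum_{y\in\mathbb{Y}}d(y)\big(\ln g_\lambda(y)+1\big)$. The key simplification is that $\sum_{y}d(y)=1-1=0$, so the constant term drops and $\phi'(\lambda)=-\sum_{y}d(y)\ln g_\lambda(y)$.

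The heart of the argument, and the step I expect to need the most care, is the limit $\lambda\to 0^+$. I would split $\mathbb{Y}$ according to whether $f^*(y)>0$. On $\mathbb{S}_{f^*}$ we have $g_\lambda(y)\to f^*(y)>0$, so those finitely many terms stay bounded. Off $\mathbb{S}_{f^*}$ we have $g_\lambda(y)=\lambda f(y)$, so for $y_0$ (and any other index where $f(y)>0=f^*(y)$) the corresponding term is $-f(y)\ln(\lambda f(y))=-f(y)\ln f(y)-f(y)\ln\lambda$, whose leading part $-f(y)\ln\lambda$ tends to $+\infty$ as $\lambda\to0^+$ because $f(y)>0$. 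Hence $\phi'(\lambda)\to+\infty$, so there is $\delta>0$ with $\phi'>0$ on $(0,\delta)$. By the mean value theorem together with continuity of $\phi$ at the endpoint $0$, it follows that $\phi(\lambda)>\phi(0)=\mathcal{H}(f^*)$ for $0<\lambda<\delta$, the desired contradiction.

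The only remaining items are routine bookkeeping: term-by-term differentiation is legitimate because the sum is finite; continuity of $\phi$ at $\lambda=0$ holds since $t\ln t\to0$ as $t\to0^+$; and one must merely confirm that the single $+\infty$ contribution is not neutralized by the bounded terms, which is clear because all the divergent summands carry the same sign. In short, once the sign of the leading $-f(y_0)\ln\lambda$ term is identified, optimality of $f^*$ is violated, proving $\mathbb{S}_f\subseteq\mathbb{S}_{f^*}$ for every $f\in\mathbb{W}$.
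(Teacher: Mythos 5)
Your proof is correct and takes essentially the same route as the paper's: both perturb $f^*$ along the line segment toward $f$ and exploit the infinite slope of $-t\ln t$ at $t=0$ to show that entropy strictly increases when mass is shifted onto an atom outside $\mathbb{S}_{f^*}$, contradicting optimality. The only differences are cosmetic — you place $f^*$ at $\lambda=0$ instead of $\lambda=1$ and close with the mean value theorem rather than the first-order optimality condition $\frac{d}{d\lambda}\mathcal{H}(f_\lambda)\geq 0$ near the maximizer, which if anything makes your endgame slightly more self-contained.
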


%PROOF
\begin{proof} Select an arbitrary $f$ in $\mathbb{W}$ and define $f_\lambda \overset{def}{=} \lambda f^*+ (1-\lambda)f$ for $0\leq \lambda \leq 1$. From the convexity of $\mathbb{W}$, we conclude that $f_\lambda$ is in $\mathbb{W}$ for all $\lambda$ in $[0,1]$. Since $f^*$ has maximal entropy, it must be that there exists a $\bar{\lambda}$ in $ [0,1)$ such that
\begin{align}\label{eqn:optcond}
	\frac{d}{d\lambda}\mathcal{H}(f_\lambda)\geq 0, ~ \lambda \in (\bar{\lambda},1).
\end{align}
\paragraph*{Proof by contradiction:} Suppose that $\mathbb{S}_f \nsubseteq \mathbb{S}_{f^*}$ and hence that there exists a $y'$ in $\mathbb{Y}$ such that $f(y')>0$ and $f^*(y')=0$. We have that
\begin{align*}
	\frac{d}{d\lambda} \bigg ( f_{\lambda}(y') \ln (f_{\lambda}(y')) \bigg )=  -f(y') \big ( \ln(f_\lambda(y)) +1 \big )
\end{align*}
goes to $\infty$, as $\lambda$ approaches $1$, since $\lim_{\lambda \rightarrow 1} f_\lambda(y') = 0$. This implies that there exists a $\tilde{\lambda}$ in $ [0,1)$ such that 
\begin{align*}
	\frac{d}{d\lambda}\mathcal{H}(f_\lambda) < 0, ~  \lambda \in  (\tilde{\lambda},1 ),
\end{align*}
which contradicts (\ref{eqn:optcond}). 
\end{proof}
See (\cite{Csiszar:2004vd}) for an alternative proof that relies on the concept of relative entropy.

%PROOF FOR THEOREM REC
\begin{proof}[Proof of Theorem \ref{thm1}]\
\begin{enumerate}[(a)]
	\item (Proof that \emph{$\mathbb{X}^R_{\mathcal{K}, \mathbb{F}} \subseteq \mathbb{S}_{f^*_X}$} holds for all $\mathcal{K}$ in $\mathbb{K}$.) Select an arbitrary control policy $\mathcal{K}$ in $\mathbb{K}$. There are two possible cases: {\bf i)} When $\mathbb{X}^R_{\mathcal{K}, \mathbb{F}}$ is the empty set, the statement follows trivially. {\bf ii)} If $\mathbb{X}^R_{\mathcal{K}, \mathbb{F}}$ is non-empty then the closed loop must have an invariant pmf $f^{\mathcal{K}}_{XU}$ that satisfies the following:  
		\begin{align}\label{eqn:invcandidate}
			f^{\mathcal{K}}_{XU}&(x^+,u^+) = \mathcal{K}(u^+, x^+)\sum_{x \in \mathbb{X}, u\in\mathbb{U}}\mathcal{Q}(x^+, x, u)f^{\mathcal{K}}_{XU}(x,u),
		\end{align}
		\begin{align}	\label{1016280812}		
			&\sum_{u\in\mathbb{U}}f^{\mathcal{K}}_{XU}(x,u)>0,~ x \in \mathbb{X}^R_{\mathcal{K}, \mathbb{F}}; & \text{(recurrence)} \\ \label{1017280812}
			&\sum_{u\in\mathbb{U}}f^{\mathcal{K}}_{XU}(x,u)=0,~ x \in \mathbb{F}.& \text{($\mathbb{F}$-safety)}
		\end{align} 

		Equation (\ref{eqn:invcandidate}) follows from the fact that $f^{\mathcal{K}}_{XU}$ is an  invariant pmf of the closed loop, while (\ref{1016280812})-(\ref{1017280812}) follow from the definition of $\mathbb{X}^R_{\mathcal{K},\mathbb{F}}$.
		
		Our strategy to conclude the proof (that $\mathbb{X}^R_{\mathcal{K}, \mathbb{F}} \subseteq \mathbb{S}_{f^*_X}$ holds) is to show that $f^{\mathcal{K}}_{XU}$ is a feasible solution of the convex program (\ref{eqn:opt1a})-(\ref{eqn:opt1c}), after which we can use Lemma \ref{lemma_entropy}. In order to show the aforementioned feasibility, note that summing both sides of equation (\ref{eqn:invcandidate}) over the set of control actions yields (\ref{eqn:opt1b}), where we use the fact that $\sum_{u^+\in\mathbb{U}}\mathcal{K}(u^+,x^+)=1$. Moreover, the constraint (\ref{eqn:opt1c}) and the $\mathbb{F}$-safety equality in (\ref{1017280812}) are identical. Therefore, $f^{\mathcal{K}}_{XU}$ is a feasible pmf to (\ref{eqn:opt1a})-(\ref{eqn:opt1c}). By Lemma \ref{lemma_entropy}, it follows that $\mathbb{S}_{f^{\mathcal{K}}_X} \subseteq \mathbb{S}_{f^*_X}$ and, consequently, that $\mathbb{X}^R_{\mathcal{K}, \mathbb{F}} \subseteq \mathbb{S}_{f^*_X}$.

	\item (Proof that \emph{$\mathbb{X}^R_{\mathbb{F}} = \mathbb{S}_{f^*_X}$} holds ) It follows from (a) that $\mathbb{X}^R_{\mathbb{F}} \subseteq \mathbb{S}_{f^*_X}$. To prove that $\mathbb{S}_{f^*_X} \subseteq\mathbb{X}^R_{\mathbb{F}}$, select an optimal policy $\mathcal{K}_R^*$ as in (\ref{eqn:control_opt}), and note that the corresponding closed loop pmf $f^*_{XU}$ is an invariant distribution, leading to:
		\begin{align*}
		f^*_{XU}(x^+,&u^+) = \mathcal{K}_R^*(u^+, x^+)\sum_{x \in \mathbb{X}, u\in\mathbb{U}}\mathcal{Q}(x^+, x, u)f^*_{XU}(x,u).
		\end{align*}
		Consider any element $\hat{x}$ in $\mathbb{S}_{f^*_X}$. Since 
			${\sum_{u\in\mathbb{U}}f^*_{XU}(\hat{x},u)>0}$ holds
		and from the fact that $f^*_{XU}$ is an invariant distribution of the closed loop, we conclude that the following must be satisfied:
		\begin{align*}
			P_{\mathcal{K}_R^*}(X_{k} = \hat{x} \text{ for some } k>0| X_0 = \hat{x})=1.
		\end{align*}
		This means that $\hat{x}$ belongs to $\mathbb{X}^R_{\mathcal{K}_R^*}$. From (\ref{eqn:opt1c}), it is clear that $\hat{x}$ is an $\mathbb{F}$-safe state and, thus, belongs to $\mathbb{X}^R_{\mathcal{K}_R^*,\mathbb{F}}$. Hence, by definition, $\hat{x}$ belongs to  $\mathbb{X}^R_{\mathbb{F}}$. Since the choice of $\hat{x}$ in $\mathbb{S}_{f^*_X}$ was arbitrary, we conclude that $\mathbb{S}_{f^*_X} \subseteq\mathbb{X}^R_{\mathbb{F}}$.

	\item (Proof that  \emph{$\mathbb{X}^R_{\mathcal{K}^*_R} = \mathbb{X}^R_{\mathbb{F}}$} holds) Follows from the proof of (b). 
\end{enumerate}
\end{proof}

\section{Further Considerations}\label{sec:further}

\noindent {\bf Computational complexity reduction. } Consider the following convex program:
\begin{align*}
	&\bar{f}_{XU} =  \arg \max_{\begin{array}{c}
f_{XU} \in \mathbb{P}_\mathbb{XU} \\
\text{s.t. } (\ref{eqn:opt1b}) \text{ and }(\ref{eqn:opt1c})\end{array}}\mathcal{H}\Big(\sum_{u\in\mathbb{U}}f_{XU}(\cdot, u)\Big),\nonumber 
\end{align*}
where the objective function has been modified to be the entropy of the marginal pmf with respect to the state (rather than the joint entropy as in (\ref{eqn:opt1a})). A simple modification of Theorem \ref{thm1} leads to the conclusion that  $\mathbb{S}_{\bar{f}_X} = \mathbb{S}_{f^*_X}$. Therefore, the modified program also provides a solution to Problem \ref{prob1} with the advantage that it requires fewer calls to the entropy function, thus reducing computational complexity. However, the optimal control policy $\bar{\mathcal{K}}_R$ (obtained in an analogous manner as in (\ref{eqn:control_opt})) may differ from $\mathcal{K}^*_R$. The most significant difference is that Remark \ref{rem2} does not apply to $\bar{\mathcal{K}}_R$.\\

\noindent {\bf Additional constraints.} Further convex constraints on $f_{XU}$ may be incorporated in the optimization problem (\ref{eqn:opt1a})-(\ref{eqn:opt1c}) without affecting its tractability. For instance, consider constraints of the following type:
\begin{align}\label{eqn:const}
\sum_{u \in\mathbb{U}, x\in\mathbb{X}}h(u)f_{XU}(x,u) \leq \beta,
\end{align}
where $h$ is an arbitrary function and $\beta$ an arbitrary real number. Let $f^*_{XU}$ be an optimal solution to (\ref{eqn:opt1a})-(\ref{eqn:opt1c}) and (\ref{eqn:const}). If the Markov chain is initialized with  $f^*_{XU}$ (an invariant distribution), the following holds for each $k$: 
\begin{align*}
E[h(U_k)]\leq \beta.
\end{align*}
Moreover, when $\mathbb{X}_{\mathbb{F}}^R$ contains only one aperiodic recurrent class, the following holds for any initial distribution:
\begin{align*}
\lim_{k\rightarrow\infty} E[h(U_k)] \leq \beta.
\end{align*}

\section{Conclusion}\label{sec:conclusion}
This paper addresses the design of full-state feedback policies for controlled Markov chains defined on finite alphabets. The main problem is to design policies that lead to the largest set of recurrent states, for which the probability to transition to a pre-selected set of forbidden states is zero. The paper describes a finitely parametrized convex program that solves the problem via entropy maximization principles.

\nocite{Garg:1999if}
\nocite{HernandezLerma:1996ti}
\nocite{Puterman:1994vo}
\nocite{Wolfe:1962ua}
\nocite{Borkar:1990cz}
\nocite{Fox:1966ut}
\nocite{Altman:1999tz}
\nocite{springerlink:10.1007/978-1-4615-0805-2_11}
\nocite{Bertsekas:1987tx}
\nocite{Hordijk:1979uv}
\nocite{Borkar:1989ty}
\nocite{Kumar:1986ww}

\bibliographystyle{plain}
\bibliography{EA}

\end{document}